\documentclass[conference,a4paper]{IEEEtran}

\addtolength{\topmargin}{9mm}

\usepackage[utf8]{inputenc} 
\usepackage[T1]{fontenc}
\usepackage{url}              
\usepackage{cite}             
\usepackage[cmex10]{amsmath}  
\interdisplaylinepenalty=1000 
\usepackage{mleftright}       
\mleftright                   

\usepackage{graphicx}         
\usepackage{booktabs}         
\usepackage{subcaption}

\usepackage{tikz}
\usepackage{amsthm,amsfonts,amsmath,amssymb,bm}
\usepackage{algorithm,algorithmic}
\usepackage{setspace}
\usepackage{arydshln}
\usepackage{mathtools}
\usepackage{cleveref}
\usepackage{cite}
\usepackage{wrapfig}
\usepackage{bbm}
\usepackage{xfrac}
\usepackage{enumerate}
\usepackage[normalem]{ulem}
\usepackage{balance}
\usepackage{graphicx,soul}

\newtheorem{lemma}{Lemma}
\newtheorem{corollary}{Corollary}
\newtheorem{claim}{Claim}

\newcommand{\expect}{\mathbb E}
\newcommand{\prob}{\mathbb P}
\newcommand{\vct}[1]{\boldsymbol{#1}} 
\newcommand{\mat}[1]{\boldsymbol{#1}} 
\DeclareMathOperator*{\argmax}{arg\,max}



\hyphenation{op-tical net-works semi-conduc-tor}

\definecolor{pavlosred}{rgb}{0.7, 0.0, 0.0}
\newcommand{\pavlos}[1]{{\color{pavlosred}{Pavlos: #1}}}

\definecolor{ceruleanblue}{rgb}{0.16, 0.32, 0.75}
\newcommand{\chaorui}[1]{{\color{ceruleanblue}{Chaorui: #1}}}


\newcommand{\defVariable}{\mathit{U}}
\newcommand{\nTests}{\mathit{T}}
\newcommand{\testResult}{\mathit{Y}}
\newcommand{\testIndex}{\mathit{\tau}}
\newcommand{\nItems}{\mathit{n}}
\newcommand{\nDef}{\mathit{k}}
\newcommand{\defProb}{\mathit{p}}
\newcommand{\depth}{\mathit{d}}
\newcommand{\binEntropy}{\mathit{h_2}}

\IEEEoverridecommandlockouts
\begin{document}

\title{A Diagonal Splitting Algorithm \\for Adaptive Group Testing}

%
%
\author{%
\IEEEauthorblockN{Chaorui Yao}
  \IEEEauthorblockA{
                    University of California, Los Angeles,\\ 
                    Los Angeles, CA 90095, USA\\
                    chaorui@ucla.edu}
   \and                   
  \IEEEauthorblockN{Pavlos Nikolopoulos}
  \IEEEauthorblockA{EPFL\\
                    1015 Lausanne, Switzerland\\
                    pavlos.nikolopoulos@epfl.ch}

\and
  \IEEEauthorblockN{Christina Fragouli}
  \IEEEauthorblockA{
                    University of California, Los Angeles,\\ 
                    Los Angeles, CA 90095, USA\\
                    christina.fragouli@ucla.edu}
  
}


%
%

\maketitle

\begin{abstract}

  Group testing enables to identify infected individuals in a population 
using a smaller number of tests than individual testing. 
To achieve this, group testing algorithms commonly assume knowledge of the number of infected individuals; nonadaptive and several adaptive algorithms fall in this category. Some adaptive algorithms, like binary splitting, operate without this assumption, but 
require a number of stages that may scale linearly with the size of the population.
In this paper, we contribute a new algorithm that enables a balance between the number of tests and the number of stages used,  and which we
term diagonal splitting algorithm (DSA).
Diagonal splitting, like binary splitting, does not require knowledge of the number of  infected individuals,  yet unlike binary splitting, is order-optimal w.r.t.
the expected number of tests it requires and is guaranteed to
succeed in a small number of stages that scales at most logarithmically with the size of the population. 
Numerical evaluations, for diagonal splitting and a hybrid approach we propose, support our theoretical findings. 
\end{abstract}

\section{Introduction}

The group testing (GT) problem dates back in the times of World War II and Dorfman's seminal work~\cite{Dorfman}. 
Simply stated, it assumes a population of individuals out of which some are infected, and the goal is to design efficient testing strategies to identify the infections from the test results.
The main component of group testing is the so-called \textit{pooled test}, which mixes together the diagnostic samples from multiple individuals, and (in its noiseless form) returns ``negative'', if all the considered individuals in the pool are healthy; otherwise, it returns ``positive'', if at least one of them is infected.

Needless to say, group testing does not apply only to medical testing---it has a large number of applications, but it is currently being ``re-discovered'' in the context of COVID-19~\cite{art1,art2,art4,Cov-GpTest-1,Cov-GpTest-2,kucirka2020-PCR}, with several countries (including US, EU states, and China) having already deployed some variant of it~\cite{GroupTest-implement1,GroupTest-implement2-FDA}. 
While the focus of this paper is also on medical testing, our approach is not bound to it in any sense. 

A large number of GT variations have been examined since Dorfman's paper (see~\cite{GroupTestingMonograph,GroupTestingBook,nested} and references therein); to the best of our knowledge, however, the following setting is still open and only a few sub-optimal solutions exist:
Consider a population size of $\nItems$ individuals and suppose that either $\nDef$ of them are infected (with all combinations $\binom{\nItems}{\nDef}$ being equiprobable), or everyone is infected i.i.d.\ with probability $\defProb$, but no information is available for $\nDef$ or $\defProb$, which are assumed to be completely \textit{unknown}. 

We explore this setting by being interested in questions such as: can we minimize the \textit{number of tests}, when the infection probability is completely unknown? Does it make sense to administer further testing to estimate it and use that estimate to administer new tests, or should we directly apply some well-established group testing algorithm (such as~\cite{binSplitting,sobel,hwang}) and expect small numbers of tests? do we need new algorithms?

In addition to the number of tests, we are also interested in the \textit{number of stages}. Adaptive algorithms achieve zero-error recovery of all infection statuses with few tests, but they typically use a large and variable number of stages.
On the other hand, non-adaptive group testing can achieve zero-error recovery in one testing stage, but it typically requires the knowledge of $\nDef$ to be efficient and tends to ``blow up'' the number of tests as the number of infections grows large; in fact, 
classic individual testing has been proved to be asymptotically optimal among non-adaptive designs in the linear ($\nDef = \Theta(\nItems)$)~\cite{individual-optimal} and the mildly sublinear infection regime ($\nDef = \omega(\frac{\nItems}{\log\nItems})$)~\cite{bay2020optimal}. 
We ask: can we achieve \textit{exact recovery} with a smaller number of stages, without increasing the expected number of tests?

The practical application of such a setting is straightforward.
Suppose we want to test a population that has been exposed to a new disease for epidemiological purposes: e.g., to update known epidemiological models (such as \cite{mathOfEpidemicsOnNetworks}) or come up with new ones based on the test outcomes.
We want our testing to be both fast and accurate;
so, 
erroneous recovery and/or a large number of testing stages, especially if each stage may last more than a day (as with molecular PCR tests), are out of the question. 
Another application lies in a recent line of work~\cite{GroupTesting-community-nonOverlap,GroupTesting-community-overlap,ayfer2021adaptive,srinivasavaradhan2021dynamic,isit-paper,zhu2020noisy} that examines the group testing problem under the assumption that infections are governed by community spread, hence individuals that belong to the same community groups~\cite{GroupTesting-community-nonOverlap,GroupTesting-community-overlap,ayfer2021adaptive,srinivasavaradhan2021dynamic,isit-paper} 
or have experienced the same social interactions~\cite{zhu2020noisy} are assumed to be infected with approximately equal probability. 
That probability, however, can be hardly known in new viral-disease cases, because little is known about how a new virus propagates inside the community, and even if an arguably (in)accurate epidemiological model is used, the information about the initial state and the exact epidemiological trajectory of the disease is missing.
So, one is left with multiple group testing strategies, without knowing which one to use. 

In this regard, we contribute a new algorithm, which we call diagonal splitting algorithm (DSA), which is order-optimal w.r.t.\ the expected number of tests it requires and is guaranteed to succeed in a small number of stages that equals $\log_{2}{\nItems}$ in the worst case.
Our algorithm is a noiseless adaptive group testing algorithm and leverages the following  observation in order to reduce the number of tests and stages: 
Consider the binary tree representation of testing via binary search (akin to Binary splitting); we can test across ``diagonals'' 
in this tree, so that, at each stage, we test in parallel all possibly infected members of the population, but partitioned in groups of variable sizes. This enables a balance between the number of tests and stages used,   performing well across both dimensions.

The paper is organized as follows: we first give some background and discuss related work (\Cref{sec:prelim}), then we describe and analyze DSA (\Cref{sec:algorithm}) and finally evaluate its performance (\Cref{sec:numerics}).

\section{Preliminaries}
\label{sec:prelim}
\subsection{Infection model and Problem formulation}

Three infection models are usually studied in the group testing literature: 
(I) in the ``combinatorial-priors model,''  a fixed number of  individuals $\nDef$ (selected uniformly at random), are infected; 
(II) in ``i.i.d probabilistic-priors model,'' each individual is i.i.d infected with some probability $\defProb$, so that the expected number of infected members is $\bar{\nDef} = \nItems \defProb$;
(III) in the ``non-identical probabilistic-priors model,'' each item $i$ is infected independently of all others with prior probability $\defProb_i$. 
Models (I) and (II) have received attention from researchers for the most part (see for example, \cite{bernoulli_testing2,bernoulli_testing1,Nonadaptive-1,PhaseTrans-SODA16,ncc-Johnson,individual-optimal,bay2020optimal,coja-oghlan20a, armendariz2020group,price2020fast,coja-oghlan19}),
while model (III) is the least studied one \cite{prior,gradient_GT}. 
We refer the reader to \cite{GroupTestingMonograph} for an excellent summary of existing work---the related parts to our work are described in the next section.

In this paper, we consider the noiseless adaptive group testing problem in the first two settings, where either $\nDef$ (in the combinatorial setting (I)) or $\defProb$ (in probabilistic setting (II)) are \textit{unknown}. 
However, our algorithm(s) can be used and the results of our analysis can be extended to model (III), as well.

Our goal is to design an algorithm that can identify the infection status of all individuals by using both a small number of tests and a small number of stages. 
This \textit{zero-error} recovery requirement is achievable with noiseless adaptive group testing (see next section), but well-known algorithms for doing so typically use a variable number of tests and stages. 
So, to fairly compare our proposed approaches against them, we use as performance metrics both: the expected numbers of tests and the expected number of stages. 

\subsection{Background and Related work}

A pooled test indexed by $\testIndex$ takes as input samples from a set of individuals  $\delta_{\testIndex}$ and outputs a binary value:  $1$ (``positive'') if at least one of the samples is infected, and $0$ (``negative'') if none of them is infected.
More precisely, let ${\defVariable_i=1}$  if individual $i$ is infected, and $0$ otherwise. 
The output of pooled test $\tau$ is calculated as $\testResult_\tau= \bigvee_{i\in \delta_{\testIndex}} \defVariable_i$, 
where $\bigvee$ stands for the \texttt{OR} operator (disjunction). 

\textbf{Minimum number of tests:}
In the combinatorial\footnote{The achievability/converse results are usually proved for combinatorial model (I) (a summary is in~\cite{price2020fast}), but they are directly applicable to model (II) by considering $\defProb=\nDef/\nItems$ (see Theorems 1.7 and 1.8 in \cite{GroupTestingMonograph} or~\cite{bay2020optimal}).}  model (I), since $\nTests$ tests allow to distinguish among $2^\nTests$ combinations of test outputs, 
we need 
$\nTests \geq \log_2{\binom{\nItems}{\nDef}}$ to identify $\nDef$ randomly infected individuals out of $\nItems$.
This is known as the \textit{counting  bound} and implies that in a sparse regime (i.e. $\nDef=\Theta(\nItems^\alpha)$ and $\alpha \in [0,1)$), no algorithm can use less than $O(\nDef\log \frac{\nItems}{\nDef})$ tests to achieve zero-error identification~\cite{CntBnd,GroupTestingBook}. 
In the probabilistic model (II), a similar bound exists for the number of tests needed on average: $\nTests \geq \nItems \binEntropy\left(\defProb\right)$,
where $\binEntropy$ is the binary entropy function~\cite{GroupTestingMonograph}.

\textbf{Adaptive GT:} 
Adaptive algorithms utilize the outcome of previous tests to determine what tests to perform next. They achieve zero-error recovery using a small but variable number of tests, and they typically use a large number of stages.

The commonest example is the {\em binary splitting algorithm (BSA)}, which implements a form of binary search (see~\cite[Procedure R4]{binSplitting} or~\cite[Algorithm 1.1]{GroupTestingMonograph}.
BSA was the first to introduce the idea of recursively splitting the population, but it failed to match the counting bound for any $\nDef$, as it is guaranteed to succeed in less than $\nDef \log_{2} \nItems + \nDef$ tests~\cite{capacity-adaptive}.
Also, as the outcome of a test must be known in order to administer the next one and no tests can be conducted in parallel, BSA needs on average as many stages as tests, i.e. linear in $\nDef$.

Another example is \textit{Hwang’s generalized binary splitting algorithm (HGBSA)}~\cite{hwang}, which also uses binary search and is already known to be order-optimal in terms of the number of tests.
Given known $\nDef$, HGBSA is guaranteed to succeed in
$T = \log_{2} \binom{\nItems}{\nDef} + \nDef$ tests~\cite{hwang,capacity-adaptive}, and the expected numbers of stages is also reduced compared with BSA. 
HGBSA can be also applied with only an upper bound of $\nDef$; however, its large benefits are achievable, only if the exact $\nDef$ value is known. 

The most closely related work in our context of unknown $\nDef$ is~\cite{sobel}, which we will call \textit{Sobel's Binary Splitting algorithm (SBSA)}, for simplicity, after one of the authors of~\cite{sobel} (same as in~\cite{binSplitting}).
SBSA is similar to BSA, but does not perform binary search blindly; instead, at each stage the algorithm also takes into account the results of previous positive/negative tests and selects the size of the next pooled test accordingly. 

Although SBSA has been empirically shown to require few tests when $\nDef$ is unknown, it bears several weaknesses with regard to our (more practical) setting: First, according to the tables provided in~\cite{sobel}, the worst-case number of stages happens when everyone is infected (i.e. $\nDef = \nItems$), and it scales as $O(\nItems)$. 
Second, at each stage, the algorithm chooses the size $X$ of the next pooled test based on some recursive expressions that depend on 
four parameters that may take any value close to $\nItems$;
and as a result, the ``on-line'' recursive computation becomes quickly intractable, while an ``off-line'' pre-computation of all $X$ values for any possible quadruple of parameters is quite expensive, scaling as $O(\nItems^4)$.
Last, the recursive expressions in~\cite{sobel} cannot help provide theoretical guarantees on the average number of tests or characterize SBSA's asymptotic behavior. 

\textbf{Few-stage adaptive GT algorithms:}
Prior works on adaptive GT with few stages and a near-optimal number of tests do exist, but none of them is directly applicable to our context.
Some of the algorithms proposed require prior knowledge of the exact value of $\nDef$ (or $\defProb$) to achieve a small number of tests (e.g. see~\cite{Damaschke2009CompetitiveGT,aldridge2022conservative} and Section 5.2 of \cite{GroupTestingMonograph}), but this violates our assumption of unknown $\nDef$---of course, one might obtain an estimate of $\nDef$ (e.g. by using approaches from \cite{Damaschke2009CompetitiveGT} or~\cite{FJOPSP16}), but this would require initially even more  tests and/or stages.
Other algorithms can only guarantee a limited number of stages under the PAC (instead of zero-error) recovery criterion~\cite{Bonis2005OptimalTA}.
Finally, as mentioned in Section 5.3 of \cite{GroupTestingMonograph}, which offers a survey on GT ``universality,'' i.e. the case where no prior knowledge is assumed, the question of the optimal order of required tests for exact recovery is still open.

\section{Diagonal splitting algorithm (DSA)}
\subsection{Algorithm}
\label{sec:algorithm}
\begin{figure}[t]
\begin{center}
\includegraphics[height = 3.5cm, width=0.45\textwidth]{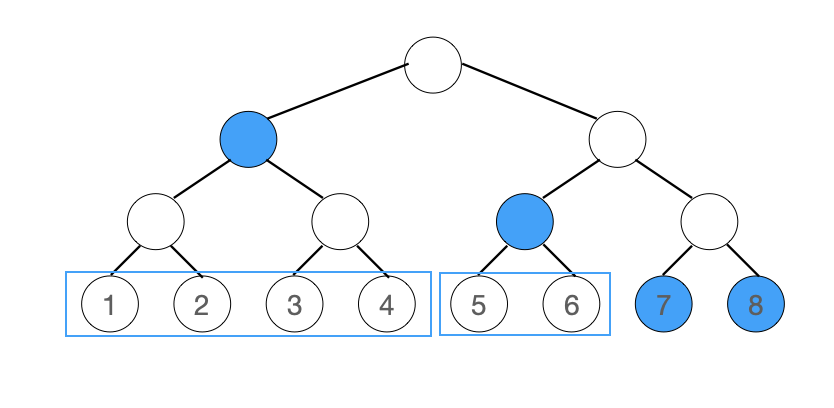}
\end{center}
\caption{An illustration of DSA for $\nItems = 8$. The blue nodes correspond to the pooled tests of the first stage.}
\label{fig:diag-demo}
\end{figure}

Suppose that $\nItems  = 2^\depth$, $\depth \in \{1,2,\ldots\}$ and consider the binary tree representation of testing via binary search. 
\Cref{fig:diag-demo} depicts an example of a population of $8$ individuals.
Each leaf of the tree corresponds to an individual of the group that may be ``negative'' (i.e., healthy) or ``positive'' (i.e., infected).
Following the standard terminology of pooled testing, at each particular level of the binary tree, a node is considered to be ``positive'', if at least one of the leaves of the sub-tree rooted from that node is positive, and it is ``negative'' otherwise.

\Cref{alg:diag} applies pooled testing only across diagonals in this tree:
At the first stage (line 1), we select $d-1$ (internal) nodes, 
starting from the second level and going down to the last-but-one level (of the leaves), so that: 
(a) no two nodes are at the same level, 
and (b) all the sub-trees rooted from the selected nodes are disjoint. 
Then, we conduct a total of $d-1$ pooled tests; each pooled test corresponds to a particular node and contains the diagnostic samples of all the leaves of the sub-tree that is rooted from it.
The outcome of each pooled test is also the label (positive or negative) of the corresponding node. 
At the same (first) stage, we also test individually the only two leaves that do not belong to any sub-tree, which gives a total of $d+1$ tests. 
See, for example, the blue nodes in the example of~\Cref{fig:diag-demo}.  

At each subsequent stage (lines 2-3), we just apply the same diagonal pooled testing procedure in parallel across all sub-trees whose root node was found positive in the current stage.  {In the example of Figure~\ref{fig:diag-demo}, if all test outcomes return positive in stage one, in stage two we perform 5 tests in parallel, a group test for leaves 1 and 2, and individual tests for leaves 3-6. An extended example is provided in \Cref{app:tree16} of~\cite{yao2023diagonal}.} 

\begin{algorithm}[t]
\caption{DSA (multi-stage).}
\begin{algorithmic}[1]
\label{alg:diag}
\STATE Apply diagonal pooled tests to the set $A$. \\
    \STATE If all the preceding results are negative, A contains no defective items, and we halt. Otherwise, continue.\\
    \STATE If any preceding pooling consists of a single item, then that item is defective. Otherwise, apply diagonal pooled tests to all positive sub-trees.
\end{algorithmic}
\end{algorithm}

The key concept behind DSA follows from a variant of HGBSA that is described in~\cite{GroupTestingMonograph}: If one knows the exact value of $\nDef$, they can reduce both the number of tests and stages by first splitting the population into $\nDef$ subsets. 
Since each of these subsets contains an average of one infected individual, if we use BSA in each subset, we can find all infections using no more than $\nDef \log_{2}(\nItems/\nDef)+O(\nDef)$~\cite{GroupTestingMonograph} tests, which is generally smaller than what is achieved by BSA and SBSA. 

Similarly, the main benefit of DSA is that: at
each stage, we test in parallel all possibly infected members
of the population, but partitioned into groups of variable sizes.
This observation is critical for our work; as we show next, this partitioning enables finding a good balance between the expected number of tests and the expected number of stages. 
 
\subsection{Analysis}
\label{sec:analysis}
Given that infections follow either the combinatorial (I) or probabilistic (II) model, we next analyze the performance of \Cref{alg:diag} w.r.t.\ the expected number of tests and stages. For the sake of simplicity of the computations, we will assume that an initial pooled test of the entire population has already been found positive before \Cref{alg:diag} is called.

We first list a few properties of the tree structure:
\begin{enumerate}[(1)]
\item At each depth $i \in \{1,2,\ldots,d-1\}$, only $2^{i-1}$ nodes can be possibly tested by \Cref{alg:diag}.
\item Any sub-tree rooted from a node at depth $i$ has $\nItems_i = 2^{d-i}$ leaves.
\item At any stage of \Cref{alg:diag}, if a node is found positive at depth $i$, with $i \in \{0,\ldots,d-1\}$, then this triggers $\nTests_i = d-i+1$ new pooled tests at the next stage. 
\item When a leaf is identified by \Cref{alg:diag}, so is its sibling---hence, both need the same number of tests. 
\end{enumerate}

\begin{lemma}
\label{lem:tests}
The expected number of tests required by \Cref{alg:diag} for exact recovery is: 
\begin{equation}
\label{eq:tests}
\expect[\nTests] = d+1 + \sum_{i=1}^{d-1} 2^{i-1} \cdot  \prob_{i,*}^{+} \cdot \left(d-i+1\right),
\end{equation}
where $\prob_{i,*}^{+}$ is the probability that a node at depth $i$ is found positive  and depends on the underlying infection model; i.e.:
\begin{equation}
\label{eq:prob.combinatorial}
     \prob_{i,Comb}^{+} = 1 - \frac{\binom{\nItems-\nDef}{\nItems_i}}{\binom{\nItems}{\nItems_i}}, \qquad
     \prob_{i,Prob}^{+} = 1- (1-\defProb)^{\nItems_i}  .  
\end{equation}
\end{lemma}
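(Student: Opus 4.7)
The plan is a linearity-of-expectation argument on candidate diagonal nodes, using properties (1)--(3) as building blocks. Stage~1 is executed deterministically and, by the algorithm description, consists of $d-1$ diagonal pooled tests together with $2$ individual leaf tests, contributing exactly $d+1$ tests. Every test performed afterwards can be attributed to a unique triggering event: a candidate diagonal node at some depth $i\geq 1$ being tested and returning positive, which by property~(3) spawns a recursive DSA call on its subtree of depth $d-i$ that contributes $d-i+1$ new tests at the next stage.

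The crux of the argument, and the step I expect to require the most care, is showing that positivity of a candidate diagonal node implies that the algorithm actually tests it. The recursive calls of DSA form a tree---a call on a subtree of depth $j$ performs $j+1$ tests and spawns children on those sub-subtrees that are found positive---so a candidate diagonal node $v$ at depth $i$ is tested iff the chain of ``diagonal ancestors'' leading to $v$'s enclosing call were each found positive in turn. The key observation is that positivity propagates up the tree: if $v$ is positive, then every ancestor of $v$ is positive, including each ancestor along that chain; the enclosing call is therefore executed and $v$ is indeed tested. Consequently $\{v \text{ positive}\} \subseteq \{v \text{ tested}\}$, and hence $\prob(v \text{ is tested and positive}) = \prob(v \text{ is positive}) = \prob_{i,*}^+$.

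Combining this with property~(1) (there are exactly $2^{i-1}$ candidate diagonal nodes at depth $i$) and property~(3), I would sum $\prob_{i,*}^+\cdot(d-i+1)$ over the $2^{i-1}$ candidate nodes at each depth $i \in \{1,\ldots,d-1\}$ and add the deterministic stage-1 contribution; linearity of expectation then delivers \eqref{eq:tests} directly, with no need to track stages or the conditional joint distribution of positivity events across depths.

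Finally, to establish~\eqref{eq:prob.combinatorial} I would compute, for each model, the probability that none of the $\nItems_i = 2^{d-i}$ leaves of a depth-$i$ subtree are infected. In the probabilistic model the leaves are i.i.d.\ Bernoulli with parameter $\defProb$, giving $(1-\defProb)^{\nItems_i}$; in the combinatorial model, by the symmetry between fixing the $\nItems_i$ subtree leaves and drawing the $\nDef$ infected individuals uniformly at random from the population, the same probability equals $\binom{\nItems-\nDef}{\nItems_i}/\binom{\nItems}{\nItems_i}$. Taking complements yields the stated expressions and completes the lemma.
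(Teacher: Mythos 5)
Your proposal is correct and follows essentially the same route as the paper's proof: a deterministic $d+1$ contribution from stage one, plus linearity of expectation over the $2^{i-1}$ candidate nodes at each depth $i$, each contributing $d-i+1$ tests with probability $\prob_{i,*}^{+}$ computed via the hypergeometric (combinatorial) or binomial (probabilistic) distribution. Your explicit justification that a positive candidate node is necessarily tested (because positivity propagates up the chain of diagonal ancestors) is a step the paper leaves implicit, and it is a worthwhile addition.
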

\begin{proof}
Note that the expected number of tests is equal to: $\expect[T] = d+1 + \sum_{i=1}^{d-1}\nTests_i \cdot \expect[\text{\# positive nodes at depth } i]$, where $T_i = d-i+1$ by property (3). 

Consider the nodes that can be tested by \Cref{alg:diag} at some depth $i$ and observe that the probability of any such node being positive is the same for all the nodes at the same depth. Denote that common probability by $\prob^{+}_{i, *}$; this equals the probability of at least one of the leaves of the sub-tree rooted from a node at depth $i$ being infected. For a combinatorial model, $\prob^{+}_{i, *}$ is computed with the help of the hypergeometric distribution, as: $\prob^{+}_{i, Comb} = 1- \sfrac{\binom{\nItems-k}{\nItems_i}}{\binom{\nItems}{\nItems_i}}$. For a probabilistic model with infection probability of $p$, $\prob^{+}_{i, *}$ is computed with the help of the binomial distribution, as: $\prob^{+}_{i, Prob} = 1- (1-p)^{\nItems_i}$. 
Using the above and property (1), we have: $\expect[\text{\# positive nodes at depth } i] = 2^{i-1} \cdot \prob^{+}_{i, *}$. 

Combining all the above concludes the proof.
\end{proof}

\begin{corollary}
\label{cor:k=1}
If $\nDef = 1$ or $\defProb = \sfrac{1}{\nItems}$, then
$\expect[\nTests] = \frac{1}{4} d^2 + \frac{5}{4} d + \frac{1}{2}$.
\end{corollary}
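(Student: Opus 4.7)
The plan is to apply Lemma~\ref{lem:tests} directly, by substituting the appropriate form of $\prob_{i,*}^{+}$ into~\eqref{eq:tests} and simplifying the resulting finite sum. The key observation is that, in both regimes of interest, $\prob_{i,*}^{+}$ collapses to $2^{-i}$. For the combinatorial case with $\nDef=1$, I would start from~\eqref{eq:prob.combinatorial} and use the identity $\binom{\nItems-1}{\nItems_i}/\binom{\nItems}{\nItems_i} = (\nItems-\nItems_i)/\nItems$, which yields $\prob_{i,Comb}^{+} = \nItems_i/\nItems = 2^{\depth-i}/2^{\depth} = 2^{-i}$. For the probabilistic case with $\defProb = 1/\nItems$, the same value emerges from the first-order linearization $1-(1-\defProb)^{\nItems_i} \approx \defProb\cdot\nItems_i$, which is tight whenever $\defProb\,\nItems_i \ll 1$.

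With $\prob_{i,*}^{+} = 2^{-i}$ plugged into~\eqref{eq:tests}, the factor $2^{i-1}\cdot 2^{-i}$ telescopes to the constant $\tfrac{1}{2}$, giving
\[
\expect[\nTests] \;=\; \depth+1 \;+\; \frac{1}{2}\sum_{i=1}^{\depth-1}\bigl(\depth-i+1\bigr).
\]
Reindexing by $j=\depth-i+1$ turns the remaining sum into an arithmetic series $\sum_{j=2}^{\depth} j = \depth(\depth+1)/2 - 1$. Substituting back and collecting the $\depth$-terms yields $\tfrac{1}{4}\depth^{2} + \tfrac{5}{4}\depth + \tfrac{1}{2}$, matching the claim. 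None of this is delicate; it is a direct computation once the $2^{-i}$ reduction is in hand.

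The only real obstacle is the $\defProb = 1/\nItems$ branch, because $1-(1-1/\nItems)^{\nItems_i}$ is not algebraically equal to $2^{-i}$: there is a genuine $O(2^{-2i})$ discrepancy inherited from $(1-1/\nItems)^{\nItems_i}$. Either the corollary should be read as an asymptotic ($\nItems\to\infty$) identity in that regime, or one must argue that the cumulative contribution of the higher-order terms to $\expect[\nTests]$ is negligible. A uniform bound on each summand via Bernoulli's inequality $(1-\defProb)^{\nItems_i}\ge 1-\defProb\,\nItems_i$, combined with the geometric decay of the error in $i$, should be enough to show that the error sums to a vanishing quantity; this is the one step where a brief, quantitative justification is required beyond the exact $\nDef=1$ calculation.
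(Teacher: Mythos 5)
Your computation is correct and is the intended route: substitute $\prob_{i,*}^{+}$ into \eqref{eq:tests}, observe that for $\nDef=1$ the hypergeometric ratio gives $\prob_{i,Comb}^{+}=\nItems_i/\nItems=2^{-i}$ exactly, so each summand contributes $\tfrac{1}{2}(d-i+1)$, and the arithmetic series yields $\tfrac{1}{4}d^2+\tfrac{5}{4}d+\tfrac{1}{2}$. That branch is airtight.

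One quantitative correction on the $\defProb=\sfrac{1}{\nItems}$ branch, which you rightly flag as the delicate part: your proposed error bound is too optimistic. The per-term discrepancy is $\delta_i = \tfrac{\nItems_i}{\nItems}-\bigl(1-(1-\tfrac{1}{\nItems})^{\nItems_i}\bigr)\approx \tfrac{1}{2}2^{-2i}$, and after multiplying by the weight $2^{i-1}(d-i+1)$ each term contributes roughly $(d-i+1)2^{-i-2}$; summing over $i$ gives a total error of order $\sum_{i\ge 1}(d-i+1)2^{-i-2}\approx \tfrac{d-1}{4}$, i.e.\ $\Theta(d)=\Theta(\log\nItems)$, \emph{not} a vanishing quantity. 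So Bernoulli plus geometric decay will not rescue the exact identity: in the probabilistic regime the stated formula is correct only in its leading $\tfrac{1}{4}d^2$ term, with the linear coefficient perturbed by a constant. This is consistent with how the paper itself treats the two models (it only claims they ``have the same asymptotic behavior'' when $\defProb=\sfrac{\nDef}{\nItems}$), so the corollary's probabilistic branch should indeed be read asymptotically, but you should not claim the correction is $o(1)$.
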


\begin{corollary}
\label{cor:k=n}
If $\nDef = \nItems$ or $\defProb = 1$, then
$\expect[\nTests] = \frac{3}{2} \nItems - 1$.
\end{corollary}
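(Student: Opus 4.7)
The plan is to specialize the formula in Lemma~\ref{lem:tests} to the fully-infected case and then reduce the resulting arithmetic sum to a closed form using $\nItems = 2^{\depth}$.

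First, I would observe that when $\nDef = \nItems$ (or, equivalently in the probabilistic model, $\defProb = 1$), every individual is infected, so every sub-tree rooted at any internal node contains at least one infected leaf. Plugging into \eqref{eq:prob.combinatorial}: for the combinatorial case, $\binom{\nItems-\nDef}{\nItems_i} = \binom{0}{\nItems_i} = 0$ for $\nItems_i \ge 1$, hence $\prob_{i,Comb}^{+} = 1$; and in the probabilistic case, $(1-\defProb)^{\nItems_i} = 0$, so $\prob_{i,Prob}^{+} = 1$. Substituting into \eqref{eq:tests} gives
\begin{equation*}
\expect[\nTests] = d + 1 + \sum_{i=1}^{d-1} 2^{i-1}(d-i+1).
\end{equation*}

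Next, I would evaluate the remaining sum. Reindex with $m = d - i + 1$ so that $m$ runs from $2$ to $d$ and $2^{i-1} = 2^{d-m}$, giving $\sum_{i=1}^{d-1} 2^{i-1}(d-i+1) = \sum_{m=2}^{d} m \, 2^{d-m}$. Using the standard identity $\sum_{m=1}^{d} m \, 2^{-m} = 2 - (d+2)2^{-d}$ (e.g.\ differentiate $\sum_m x^m$ and set $x=1/2$), multiplying by $2^d = \nItems$ yields $\sum_{m=1}^{d} m \, 2^{d-m} = 2\nItems - d - 2$. Subtracting the $m=1$ term (which contributes $2^{d-1} = \nItems/2$) leaves $\sum_{m=2}^{d} m \, 2^{d-m} = \tfrac{3}{2}\nItems - d - 2$.

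Finally, I would assemble the pieces: $\expect[\nTests] = d + 1 + \tfrac{3}{2}\nItems - d - 2 = \tfrac{3}{2}\nItems - 1$, proving the claim. There is no real obstacle here; the only care needed is the reindexing of the sum and the invocation of the arithmetico-geometric identity, both of which are routine. Alternatively, one could verify the closed form by induction on $d$, but the direct computation is cleaner.
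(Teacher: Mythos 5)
Your proof is correct and follows the only natural route — the paper states this corollary without proof, as an immediate consequence of Lemma~\ref{lem:tests} with $\prob_{i,*}^{+}=1$, which is exactly what you carry out. Your reindexing and the arithmetico-geometric sum check out (e.g.\ for $\depth=3$, $\nItems=8$: $4+3+4=11=\tfrac{3}{2}\cdot 8 - 1$).
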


A few observations w.r.t.\ the expected number of tests:
\begin{itemize}
\item The expected number of tests of DSA in the combinatorial and the probabilistic model have the same asymptotic behavior (as $\nItems$ goes to infinity), if $\defProb = \sfrac{\nDef}{\nItems}$.
\item If $\nDef = 1$ (or $\defProb = \sfrac{1}{\nItems}$), DSA is slightly worse than BSA, which needs $\log_{2}{\nItems}+1$ tests on average. If $\nDef = \nItems$ (or $\defProb = 1$), DSA compares favorably to BSA, which needs $\nItems (\log_{2}\nItems + 1)$ tests on average (see \Cref{sec:prelim} for details).
\item In the case of arbitrary fixed $\nDef$ (or $\defProb$), DSA is asymptotically order-optimal w.r.t.\ the performance of HGBSA (and the counting bound), albeit with a possibly large constant. This claim is proved using a loose upper bound in \Cref{app:optimality} of~\cite{yao2023diagonal}, and it is also validated in our numerics (e.g. see Fig.~\ref{fig:comb1024:tests}).
\end{itemize}

\begin{lemma}
    The number of stages required by \Cref{alg:diag}, in the worst case, equals $\log_{2}\nItems$.
\end{lemma}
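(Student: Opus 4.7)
My approach is to exploit the recursive tree structure of DSA by tracking the minimum depth at which any still-unresolved positive sub-tree sits at the start of each stage, aiming to show that this quantity strictly increases by at least one per stage, so no more than $\depth$ stages can elapse.

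Two structural facts drive the argument, both read directly off the description of \Cref{alg:diag} together with property (3) in \Cref{sec:analysis}: (i) processing a positive sub-tree rooted at depth $i<\depth-1$ triggers pooled tests at depths $i+1, i+2, \ldots, \depth-1$ together with two individual leaf tests, so every newly produced candidate positive sub-tree lies strictly below depth $i$; and (ii) processing a positive sub-tree at depth $\depth-1$ reduces to two individual leaf tests only, terminating that branch without creating any further sub-tree.

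With these in hand, I would run an induction on the stage index $s\ge 1$ to show that every sub-tree still unresolved at the start of stage $s+1$ is rooted at depth at least $s$. The base case, stage $1$, starts with the root at depth $0$ (known to be positive by the standing assumption in \Cref{sec:analysis}) and, by fact (i), yields unresolved sub-trees only at depths $1, \ldots, \depth-1$. The inductive step follows from fact (i) applied to every processed sub-tree of depth less than $\depth-1$, combined with fact (ii), which retires the depth-$(\depth-1)$ sub-trees within the same stage. Hence any sub-tree still present at the start of stage $\depth$ must sit exactly at depth $\depth-1$, and that stage finishes the job via individual leaf tests, giving the upper bound of $\depth=\log_{2}\nItems$ stages.

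For tightness I would instantiate the all-infected case ($\nDef=\nItems$, equivalently $\defProb=1$), in which every pooled test returns positive: stage $1$ produces positive sub-trees at each depth $1, \ldots, \depth-1$; the minimum depth of an unresolved sub-tree then grows by exactly one per stage, and the algorithm uses all $\depth$ stages before halting. The main delicate point I anticipate is the bottom-of-tree edge case: a positive depth-$(\depth-1)$ node still consumes a full stage for its two individual leaf tests, and overlooking this is the natural off-by-one slip that would incorrectly yield $\depth-1$ instead of $\depth$.
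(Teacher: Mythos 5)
Your proof is correct and rests on the same underlying idea as the paper's: the deepest pair of leaves in the recursion is only reached after $\depth$ stages, and no branch can take longer because each stage pushes the root of every unresolved sub-tree at least one level deeper. The paper's own proof is a one-line identification of the worst-case instance (an infection among the two leftmost leaves); your induction on the minimum depth of unresolved sub-trees supplies the upper-bound half of the argument that the paper leaves implicit, and your all-infected instance for tightness is an equally valid witness.
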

\begin{proof}
    The worst case for \Cref{alg:diag} is when (at least) one of the two leftmost leaves is infected, which results in a number of stages equal to the depth of the tree $d = \log_{2}\nItems$. 
\end{proof}

An important observation that follows from the above is that the worst-case number of stages of \Cref{alg:diag} becomes significantly smaller than the expected number of stages of BSA (which scales as $O(\nDef \log_{2}\nItems)$), as the number of infections $\nDef$ grows large (see also Fig.~\ref{fig:comb1024:stages} for numerical evidence).

It is worth repeating that DSA  achieves all the above without knowing anything about $\nDef$ (or $\defProb$).

\subsection{A hybrid variant}

We observe that the DSA neither assumes knowledge of the number of infections $\nDef$, nor attempts to estimate it. However,  even from the outcome of the  first stage of diagonal tests, we may be able to get an estimate of $\nDef$ that we can then use to further reduce the number of tests needed. For example, if all test outcomes at the first stage return positive, then with a high probability more than $30\%$ of the population size is infected, and thus we may directly proceed with individual testing (according to~\cite{LinBndIndv2}). 

In this section, we propose a simple hybrid algorithm that builds on this observation. We merge the diagonal splitting algorithm with HGBSA as described in Algorithm~\ref{alg:hybrid}. Recall that HGBSA decides the size of group tests depending on  $\nDef$, and reduces to individual testing for large $\nDef$. To estimate $\nDef$, we  simply use the test outcomes of the first stage of DSA. We do so via likelihood calculations: for different numbers of infections $\nDef$,  certain test outcome patterns $\vct{s}$ of the diagonal algorithm are more likely to occur.


Given the population size $\nItems$, we first construct a $(\nItems+1)$-by-$(2\nItems)$ matrix $\mat{M}$, where the rows correspond to the possible number of defective items $k$, the columns correspond to possible outcome pattern of the first/upper diagonal slicing, and each element $\mat{M}[k,\vct{s}]$ counts the possible occurrences of a test outcome pattern $\vct{s}$ for a given number of infections $\nDef$---we represent the test outcomes with binary vectors, hence the bold font of $\vct{s}$. 
Let $\vct{s_0}$ denote the pattern, where we modify the first nonzero bit of $\vct{s}$ to zero. Also, denote by $N_0$ the number of leaves in the sub-tree, where the first nonzero bit  of $\vct{s}$ refers to. 
The following lemma enables us to efficiently populate the table~$\mat{M}$. Its proof is given in \Cref{app:recursiveM} of~\cite{yao2023diagonal}; an example such matrix is in \Cref{app:exampleM} of~\cite{yao2023diagonal}.
\begin{lemma}
\label{lem:recursiveM}
The elements of $\mat{M}$ have the following recurrence:
\begin{equation}
   \mat{M}[\nDef,\vct{s}] = 
    \sum_{i=1}^{\nDef}\binom{N_0}{i}\mat{M}[\nDef-i,\vct{s_0}], \text{ where } \mat{M}[0,0]=1.
\end{equation}
\end{lemma}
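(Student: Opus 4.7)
The plan is to derive the recurrence by a direct counting argument that partitions the configurations enumerated by $\mat{M}[\nDef,\vct{s}]$ according to how many defective leaves land inside the subtree attached to the first nonzero bit of $\vct{s}$; the base case $\mat{M}[0,\vct{0}]=1$ will be handled separately as the unique empty configuration producing the all-negative pattern.

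First, I would fix notation by recalling that the first stage of DSA partitions the $\nItems$ leaves into $d+1$ pairwise disjoint groups---the $d-1$ diagonal subtrees together with the two residual leaves tested individually---and that each bit of $\vct{s}$ records the outcome of the pooled test on one such group. Let $j$ denote the position of the first nonzero bit of $\vct{s}$ and let $G_j$ be the associated group, so that $|G_j|=N_0$. By definition, $\vct{s_0}$ agrees with $\vct{s}$ at every position except $j$, where it is zero. Next, I would classify every placement of $\nDef$ defectives that produces $\vct{s}$ by the number $i$ of defectives lying inside $G_j$. Because the $j$-th bit of $\vct{s}$ is positive, we must have $i\geq 1$, and trivially $i\leq \nDef$. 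For each such $i$, the $i$ defectives inside $G_j$ can be placed in $\binom{N_0}{i}$ ways, and this choice is independent of the placement of the remaining $\nDef-i$ defectives outside $G_j$.

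The crux of the argument is to identify the number of placements of those $\nDef-i$ defectives in the complement of $G_j$ that reproduce the other bits of $\vct{s}$ with $\mat{M}[\nDef-i,\vct{s_0}]$. This works because $\vct{s_0}$ has a zero at position $j$, which forces any configuration counted by $\mat{M}[\nDef-i,\vct{s_0}]$ to place no defective inside $G_j$, while the remaining bits of $\vct{s_0}$ coincide with those of $\vct{s}$; such configurations are therefore exactly the ones we need. Summing over $i=1,\ldots,\nDef$ and using the disjointness of the cases yields the claimed recurrence. The main obstacle is precisely this bijective identification, which leans on the disjointness of the $d+1$ diagonal test groups; once it is accepted, the rest is bookkeeping. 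I would also briefly record that the standard conventions $\binom{N_0}{i}=0$ for $i>N_0$ together with the vanishing of empty sums make the formula consistent with trivial boundary cases (e.g.\ $\nDef=0$ with $\vct{s}\neq\vct{0}$ gives $\mat{M}[0,\vct{s}]=0$), so no further case analysis beyond the base entry $\mat{M}[0,\vct{0}]=1$ is required.
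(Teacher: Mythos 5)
Your proposal is correct and follows essentially the same counting argument as the paper: condition on the number $i$ of defectives in the first positive subtree, count $\binom{N_0}{i}$ placements inside it, and recurse on the zeroed-out pattern $\vct{s_0}$ for the rest. Your writeup is in fact somewhat more careful than the paper's, since you explicitly justify why configurations counted by $\mat{M}[\nDef-i,\vct{s_0}]$ place no defectives in that subtree and you handle the boundary conventions ($\binom{N_0}{i}=0$ for $i>N_0$, empty patterns).
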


Next, we normalize $\mat{M}$ by row to get the 
likelihood values for each $\nDef$: $L_{\vct{s}}(\nDef) = \prob[\vct{s}|k]$. 

\begin{algorithm}[t]
\caption{Hybrid GT (multi-stage).}
\begin{algorithmic}[1]
\label{alg:hybrid}
\STATE Apply diagonal pooled tests to the set $A$. Get pattern $\vct{s}$. \\
    \STATE Estimate $\hat{\nDef} = \argmax_\nDef L_{\vct{s}}(\nDef)$. Allocate $\hat{\nDef}$ to each positive sub-trees proportionally to the number of their leaves.\\
    \STATE Use HGBSA in positive sub-trees with allocated estimates.
\end{algorithmic}
\end{algorithm}

Our hybrid variant uses the normalized version of $\mat{M}$ and \Cref{alg:hybrid} to identify the status of every individual:
it first applies one stage of diagonal pooled tests to obtain the test outcome pattern $\vct{s}$ (line 1).
Given $\vct{s}$, it estimates the total number of estimated infections $\nDef$ as: $\hat{\nDef} = \argmax_\nDef L_{\vct{s}}(\nDef)$, and it allocates $\hat{\nDef}$ to each positive-identified sub-tree proportionally to the number of its leaves (line 2).
Finally, in each positive sub-tree, it runs HGBSA assuming that the number of infections equals the allocated estimate (line 3).


\section{Numerical Evaluation}
\label{sec:numerics}
\begin{figure}[t]
\label{fig:comb16}
\begin{subfigure}{\linewidth}
\centering
\includegraphics[height = 5cm, width = 0.9\linewidth]{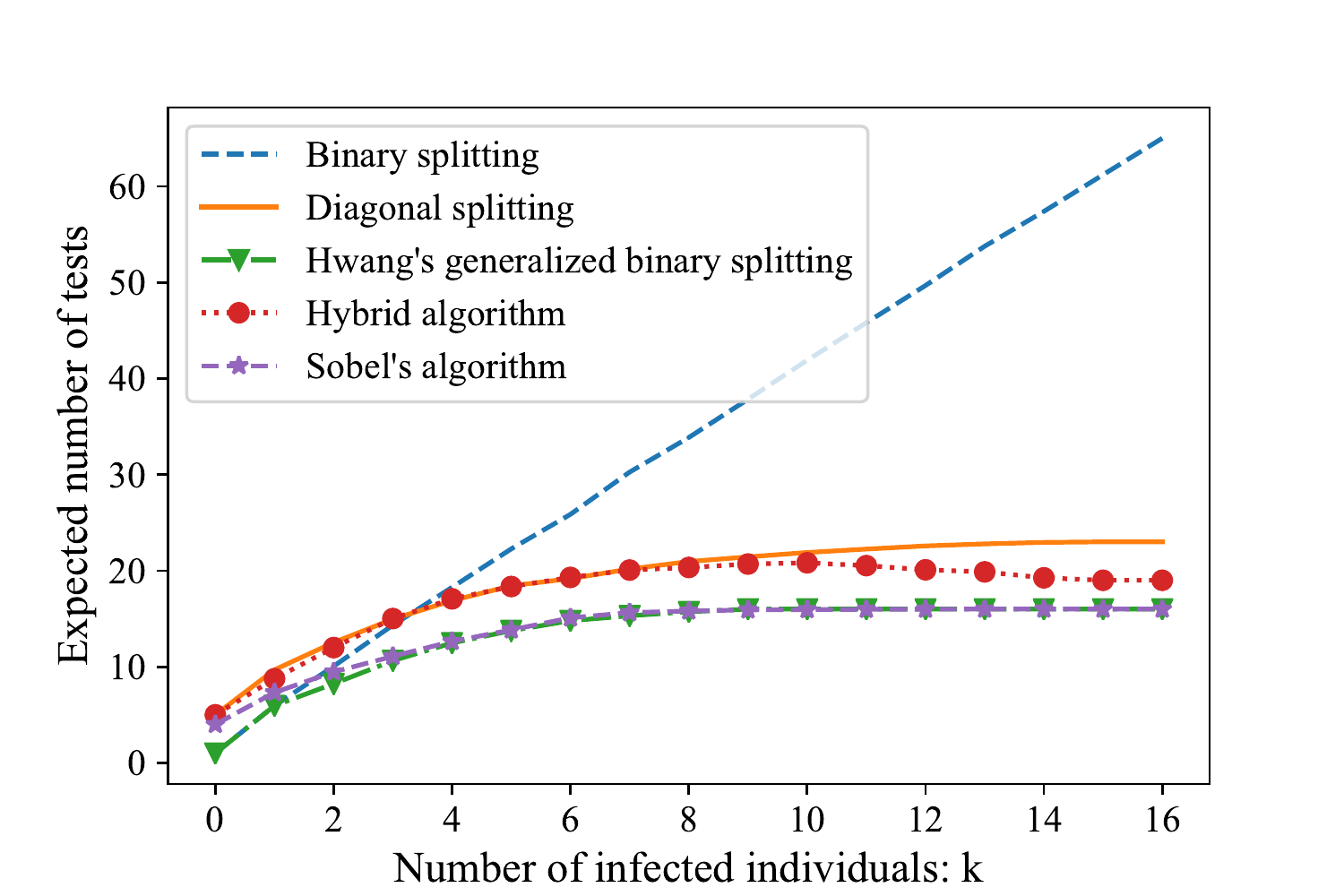}
\caption{Number of tests}
\label{fig:comb16:tests}
\end{subfigure}

\begin{subfigure}{\linewidth}
\centering
\includegraphics[height = 5cm, width = 0.9\linewidth]{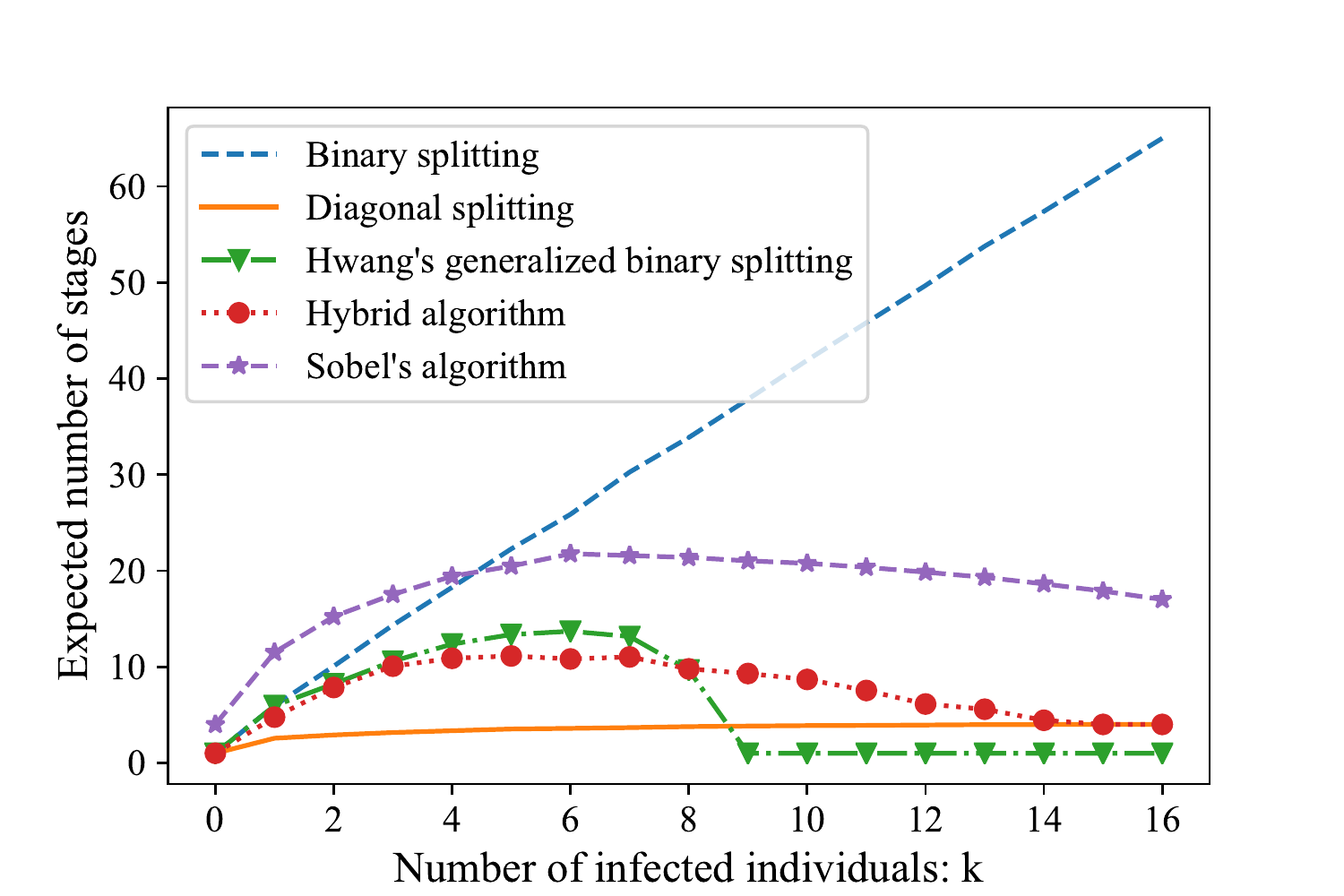}
\caption{Number of stages}
\label{fig:comb16:stages}
\end{subfigure}%
\caption{Case $\nItems=16$, results averaged over $500$ instances.}
\vspace{-0.5cm}
\end{figure}

\textit{Experimental setup:}
In various populations of two different sizes ($\nItems = 16$ and $\nItems = 1024$), some individuals are randomly infected according to the combinatorial and probabilistic infection models from \Cref{sec:prelim}. 
We use BSA, HGBSA, SBSA and our Algorithms \ref{alg:diag} and \ref{alg:hybrid} to identify the infected individuals without error, and each time we evaluate their performance in terms of both the total number of tests and stages they need. 
For brevity, we report only the combinatorial case---the probabilistic case is in \Cref{app:probabilistic} of~\cite{yao2023diagonal} with similar results. 

\textit{Notes:} (a) HGBSA does leverage the exact knowledge of $\nDef$ or $\defProb$, while the other algorithms are agnostic to them;
(b) SBSA was proved to be computationally expensive, and we could only evaluate its performance in the case of $\nItems = 16$.

\textit{Combinatorial setting:}
To examine the performance of the algorithms w.r.t.\ various numbers of infected individuals $\nDef$, we range $\nDef$ in $\{1,...,\nItems\}$. 
For each integer value of $\nDef$, we generate at least $500$ random instances of infected populations over which we average our results. Following the combinatorial infection model in each instance, we randomly select $\nDef$ people to be infected, so that all $\binom{\nItems}{\nDef}$ combinations are equiprobable.

\textit{Results:}
Figures~\ref{fig:comb16:tests} and~\ref{fig:comb1024:tests} depict the performance w.r.t.\ the average number of \textit{tests}: 
We observe that SBSA and HGBSA outperform all other algorithms, while BSA has mixed performance characteristics---it offers marginal benefits over SBSA when $\nDef$ is low, but the number of tests quickly increases as $\nDef$ becomes larger. The latter is expected, as the performance of BSA is linear to the number of infections (see~\Cref{sec:prelim}).
DSA  largely outperforms BSA and performs order-optimally w.r.t.\ HGBSA (having a degradation of about $20\%$), while our hybrid variant offers further benefits compared with DSA  as $\nDef$ grows large. 

Things seem to get reversed in Figures~\ref{fig:comb16:stages} and \ref{fig:comb1024:stages} that show our results w.r.t.\ the average number of \textit{stages}:
DSA  significantly outperforms all algorithms, except for HGBSA whenever $\nDef \geq (\nItems + 1)/2$. This is because, in that regime, HGBSA applies individual testing that can occur in one stage; however, to accomplish this, HGBSA requires the exact knowledge of $\nDef$. 
Similarly, our hybrid variant performs better than prior approaches for $\nDef < \nItems/2$, while it matches the performance of DSA  as $\nDef$ goes close to $\nItems$.
For all algorithms, there seems to be an interesting trade-off between the tests and the stages (that is perhaps worth analyzing) and DSA  does keep a good balance in this regard.


\begin{figure}[t]
\label{fig:comb1024}
\begin{subfigure}{\linewidth}
\centering
\includegraphics[height = 5cm, width = 0.9\linewidth]{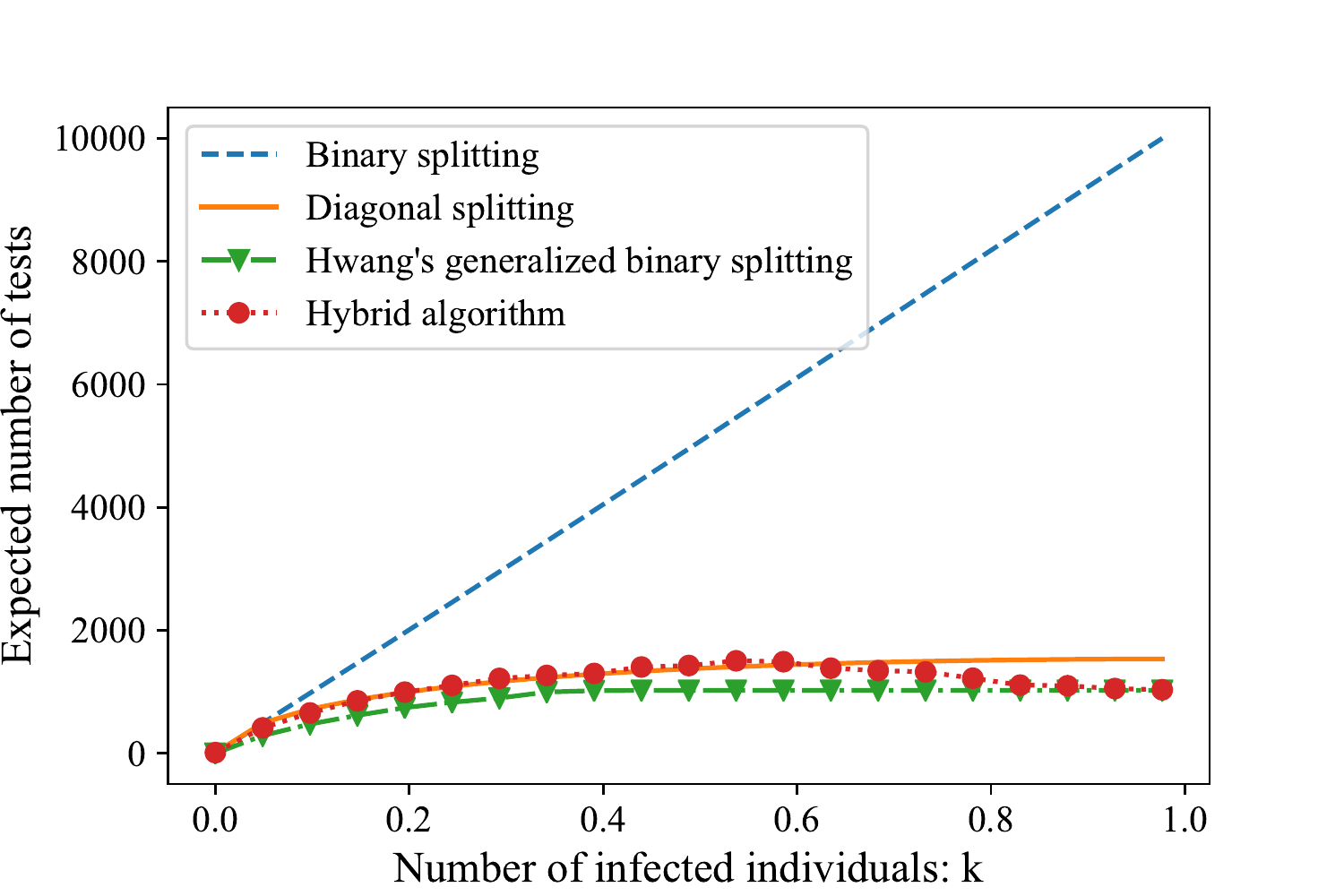}
\caption{Number of tests}
\label{fig:comb1024:tests}
\end{subfigure}

\begin{subfigure}{\linewidth}
\centering
\includegraphics[height = 5cm, width = 0.9\linewidth]{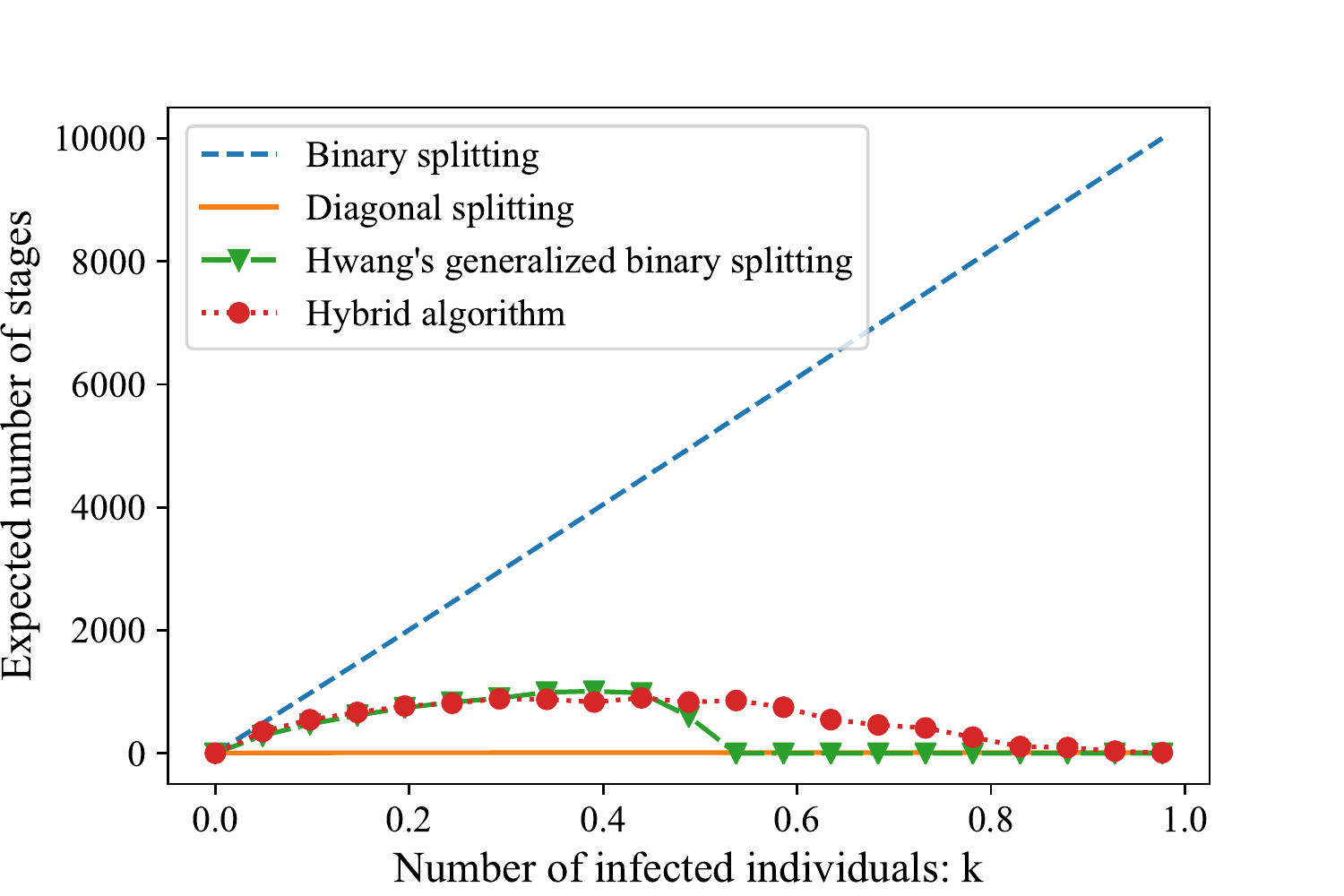}
\caption{Number of stages}
\label{fig:comb1024:stages}
\end{subfigure}%
\caption{Case $\nItems=1024$, results averaged over $1000$ instances.}
\vspace{-0.5cm}
\end{figure}

\section{Conclusion}
In this paper, we present a diagonal splitting algorithm (DSA) for adaptive group testing with no prior information about the number or the probability of infection(s). We theoretically analyze and empirically evaluate its performance, and we show that it operates well in terms of both the number of tests and the number of stages, thereby keeping a good balance between testing economy and timeliness. In our opinion, it would be interesting to further explore the trade-off between the number of stages and the number of tests, potentially by adaptively updating the estimate of $\nDef$ along the course of the stages and utilizing it efficiently.


\clearpage
\bibliographystyle{IEEEtran}
\bibliography{bibliography}

\newpage
\appendix


\subsection{A larger testing tree structure}
\label{app:tree16}
\begin{figure*}[t]
\begin{center}
\includegraphics[width=\textwidth]{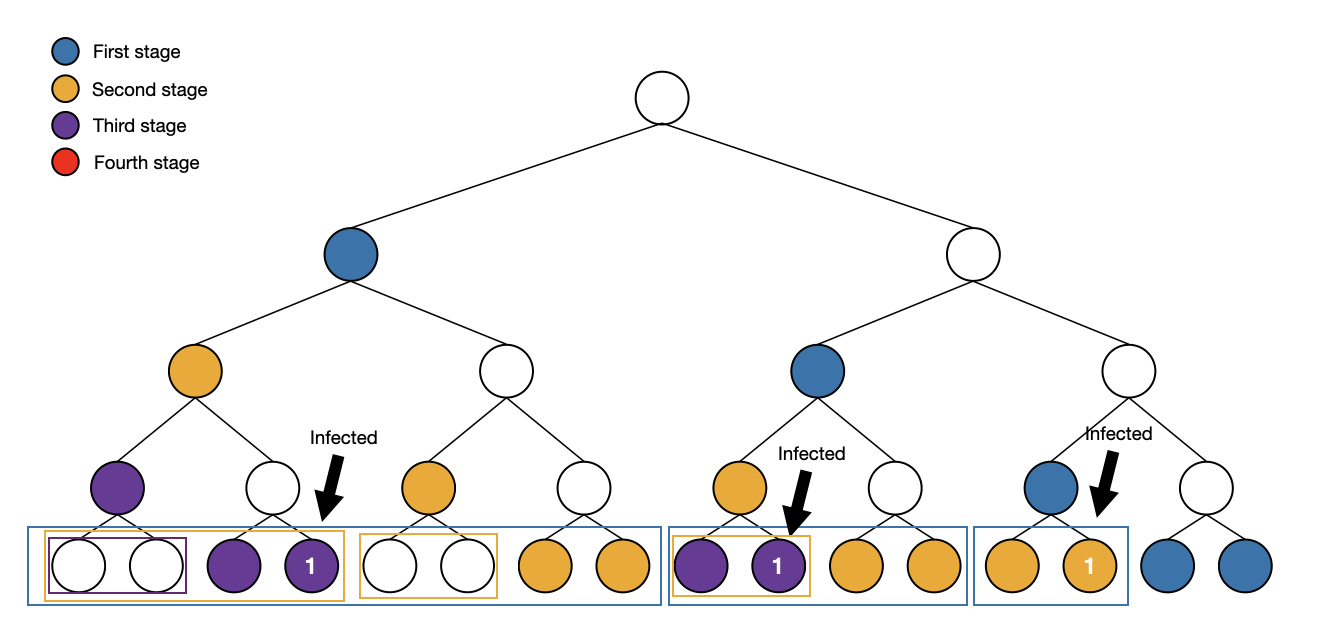}
\end{center}
\caption{An illustration of DSA  for $\nItems = 16$. The colored nodes correspond to the pooled tests of each stage.}
\label{fig:diag-demo16}
\end{figure*} 
Figure \ref{fig:diag-demo16} is an illustration of diagonal splitting for a population of 16 with 3 infections. The colored nodes correspond to the pooled tests of each stage.

\subsection{Order-optimality of the number of tests}
\label{app:optimality}
\begin{claim}
The expected number of tests $\mathbb{E}[T]$ for probabilistic model is upper bounded as follows:
$\expect[\nTests] \leq \frac{1}{C} \log_{2}{\binom{n}{k}}+(\log_{2}C+2)k$, for some constant $C$.
\end{claim}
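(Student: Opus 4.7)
My plan is to upper-bound the expression from Lemma~1 in the probabilistic case,
$$\expect[T] = d+1+\sum_{i=1}^{d-1}2^{i-1}\bigl(1-(1-p)^{2^{d-i}}\bigr)(d-i+1),$$
by combining two elementary bounds on the positive-node probability: the trivial bound $1-(1-p)^{n_i}\le 1$, which is tight when the sub-tree of size $n_i$ is large enough to almost surely contain an infection, and the Bernoulli inequality $1-(1-p)^{n_i}\le p\,n_i$, which is tight when the sub-tree is small. These two regimes meet at depths where $p\,n_i\approx 1/C$, which motivates splitting the summation at the threshold $i^{\star}=\log_2(Ck)$, with $k=np$ and $C$ a free parameter to be optimized at the end.

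The bulk of the argument is then two closed-form evaluations. For the shallow sum ($i\le i^{\star}$), the identity $\sum_{i=1}^{m}i\,2^{i-1}=(m-1)2^{m}+1$ gives $\sum_{i=1}^{i^{\star}}2^{i-1}(d-i+1)=2^{i^{\star}}(d-i^{\star}+2)-(d+2)$; after substituting $2^{i^{\star}}=Ck$ and $d-i^{\star}=\log_2\bigl(n/(Ck)\bigr)$, this becomes $Ck\log_2\bigl(n/(Ck)\bigr)+2Ck-(d+2)$. For the deep sum ($i>i^{\star}$), the Bernoulli bound collapses the factor $2^{i-1}\cdot p\,n_i$ to $k/2$, leaving the arithmetic progression $\tfrac{k}{2}\sum_{i=i^{\star}+1}^{d-1}(d-i+1)$, which evaluates to $O\bigl(k\bigl(\log_2(n/(Ck))\bigr)^{2}\bigr)$. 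Finally, the standard inequality $\log_2\binom{n}{k}\ge k\log_2(n/k)$ lets me convert $k\log_2(n/k)$ into a multiple of $\log_2\binom{n}{k}$, and rearranging (identifying $C$ with the reciprocal of the multiplicative constant in front of $\log_2\binom{n}{k}$) would produce the stated form $\tfrac{1}{C}\log_2\binom{n}{k}+(\log_2 C+2)k$.

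The main obstacle is the quadratic factor $\bigl(\log_2(n/(Ck))\bigr)^{2}$ that emerges from the deep sum: unless $C$ is allowed to depend on $n/k$, this term strictly exceeds the $O(k\log C)$ slack permitted by the claim. Two natural remedies are (i) sharpening the upper bound on $1-(1-p)^{n_i}$---for instance, using $1-e^{-p\,n_i}$ and summing the resulting geometric tail, which avoids the step-function behavior of $\min(1,p\,n_i)$---or (ii) balancing the threshold $i^{\star}$ so that the quadratic contribution is absorbed into the $\tfrac{1}{C}\log_2\binom{n}{k}$ term, at the cost of taking $C$ as small as $\Theta(1/\log_2(n/k))$. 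Either route reduces to straightforward algebra at the end, but the balancing step, and in particular the tension between the $Ck$ growth of the shallow part and the $k(d-i^{\star})^{2}$ growth of the deep part, is where the analytic care is needed.
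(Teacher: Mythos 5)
Your skeleton matches the paper's: start from Lemma~\ref{lem:tests}, split the sum at a threshold depth near $\log_2 k$, use the trivial bound $\prob^{+}_{i,*}\le 1$ on the shallow levels, and evaluate the geometric sums via $\sum_{i=1}^{m} i\,2^{i-1}=(m-1)2^{m}+1$. The gap you flag in the deep tail is real, but your remedy (i) cannot close it: on levels with $p\,n_i\ll 1$ the Bernoulli bound (equivalently $1-e^{-p n_i}\approx p n_i$) is essentially tight, so the deep tail genuinely contributes $\Theta\big(k\,(\log_2 (n/k))^2\big)$ --- the quadratic factor is not slack in your estimate but the true order of that part of the sum. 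Corollary~\ref{cor:k=1} confirms this: for $k=1$ the exact answer is $\tfrac14 d^2+\tfrac54 d+\tfrac12=\Theta\big((\log_2 n)^2\big)$, which no bound of the form $\tfrac1C\log_2\binom{n}{1}+(\log_2 C+2)=O(\log_2 n)$ with a genuinely constant $C$ can dominate. So the statement cannot be proved with $C$ a universal constant, and your ``main obstacle'' is precisely why.

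The paper handles the tail differently, and in a way that is looser than yours level-by-level: for $i>\beta$ it bounds $1-e^{-k/2^i}$ by the constant $\epsilon=1-e^{-k/2^{\beta}}$ (monotonicity in $i$) while \emph{keeping} the factor $2^{i-1}$, so the tail evaluates to about $\tfrac32\epsilon\,n$ rather than your $\tfrac{k}{2}\sum_j (j+1)$. It then absorbs this by taking $\epsilon=o(1/n)$ --- but since $C=-\ln(1-\epsilon)\approx\epsilon$, this forces $1/C$ to grow with $n$, which the paper concedes (``the hidden term $\tfrac1C$ may be large''). In other words, the paper is implicitly taking your remedy (ii): the conclusion that survives is $\expect[\nTests]=O\big(\log_2\binom{\nItems}{\nDef}+\nDef\big)$ with a leading coefficient that is not a fixed constant. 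To finish your argument in the same spirit, drop remedy (i), carry out remedy (ii) explicitly with $C=\Theta\big(1/\log_2(n/k)\big)$ so the quadratic term is absorbed into $\tfrac1C\,k\log_2(n/k)\le\tfrac1C\log_2\binom{n}{k}$, and state the result with that caveat on $C$; your Bernoulli estimate of the tail is actually sharper than the paper's $\tfrac32\epsilon n$ in the sparse regime, so this route loses nothing.
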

\begin{proof}
Choose a depth $\beta$ s.t. $\epsilon = 1-e^{-\frac{k}{2^\beta}} > 0$ is a small positive constant. Then, 
\begin{equation}\beta = \log_2\frac{k}{-\ln{(1-\epsilon)}} \label{eq:beta}.\end{equation}
    From \Cref{lem:tests}, we have: 
    \begin{align}
        &\expect[\nTests] = 1+d + \sum_{i=1}^{d-1} 2^{i-1} \cdot  \prob_{i,Prob}^{+} \cdot \left(d-i+1\right) \nonumber\\         
        &= 1+d + \sum_{i=1}^{d-1} 2^{i-1} \cdot  (1- (1-\frac{k}{n})^{2^{d-i}}) \cdot \left(d-i+1\right) \nonumber \\
        &\overset{(a)}{\sim} 1+d + \sum_{i=1}^{d-1} 2^{i-1} \cdot  (1- e^{-\frac{k}{2^{i}}}) \cdot \left(d-i+1\right) \nonumber\\
        &\overset{(b)}{\leq} 1+d + 
\underbrace{\sum_{i=1}^{\beta} 2^{i-1} \cdot \left(d-i+1\right)}_{\text{(c)}} +\underbrace{\epsilon \sum_{i=\beta+1}^{d-1} 2^{i-1} \cdot \left(d-i+1\right)}_{\text{(d)}} \label{eq:upper_bound},
\end{align}
where:\\
(a) holds because $\lim_{\nItems\rightarrow \infty} (1-\frac{\nDef}{\nItems})^{r} = e^{-\frac{\nDef r}{\nItems}}$ and $\nItems = 2^{d}$; \\
(b) holds because  $1-e^{-\frac{\nDef}{2^i}} \leq 1$, and $1-e^{-\frac{\nDef}{2^i}} \leq \epsilon$ for $i > \beta$; the latter holds because $1-e^{-\frac{\nDef}{2^i}}$ is monotonically decreasing w.r.t.\ $i$. 

Next, we compute the two terms $(c)$ and $(d)$, as follows: 
  \begin{equation}
            (c) = (d+1)\sum_{i=1}^{\beta} 2^{i-1} - \underbrace{\sum_{i=1}^{\beta}i 2^{i-1}}_{\text{(e)}}  
      = (d+1)(2^{\beta}-1) - (e). \label{eq:term_c_intermediate}
  \end{equation}
  and
  \begin{align}
      (e)&=\sum_{i=1}^{\beta} \frac{\partial }{\partial x}x^{i}\bigg{|}_{x=2}= \frac{\partial }{\partial x} \sum_{i=1}^{\beta} x^{i} \bigg{|}_{x=2} \nonumber \\
      &=\frac{\partial}{\partial x} \left( \frac{1-x^{\beta+1}}{1-x} -1 \right) \bigg{|}_{x=2} \nonumber\\
      &= \frac{x^\beta (\beta x - \beta -1) + 1}{(x-1)^{2}} \bigg{|}_{x=2} \label{eq:term_e_intermediate} \\      
      &= (\beta-1)2^{\beta}+1 \label{eq:term_e},
  \end{align} by the linearity of operations.

  Similarly, 
  \begin{align}
  (d) &\overset{(f)}{=}  \epsilon \sum_{j=2}^{d-\beta} j2^{d-j} \nonumber \\
  & = \frac{1}{2}\epsilon \cdot 2^{d} \sum_{j=2}^{d-\beta} j \left(\frac{1}{2}\right)^{j-1} \nonumber \\
  & = \frac{1}{2}\epsilon \cdot2^{d} \left( -1 +  \sum_{j=1}^{d-\beta} j \left(\frac{1}{2}\right)^{j-1}\right) \nonumber \\
  & \overset{(g)}{=} \epsilon \cdot 2^{\beta}(\beta-d-2)+\frac{3}{2}\epsilon \cdot 2^d \label{eq:term_d}
  \end{align}
  where:\\ 
  (f) holds, if we set $j := d-i+1$; \\
  (g) is obtained from \cref{eq:term_e_intermediate} after substituting $\beta$ with $d-\beta$ and by taking $x=\sfrac{1}{2}$.

  We now combine (\ref{eq:upper_bound}), (\ref{eq:term_c_intermediate}), (\ref{eq:term_e}) and (\ref{eq:term_d}), and we get (as $\nItems$ goes to infinity): 
  \begin{align}
  \expect&[\nTests] \leq \left(1 -\epsilon \right)\cdot 2^{\beta}\left(d -\beta +2 \right) + \frac{3}{2}\epsilon \cdot 2^d -1 \\
  & < 2^{\beta}\left(d -\beta +2 \right) + \frac{3}{2}\epsilon \cdot 2^d -1 \\
  &\overset{(\ref{eq:beta})}{=} \frac{k}{-\ln{(1-\epsilon)}}\left(\log_{2} n-\log_{2}{\frac{k}{-\ln{(1-\epsilon)}}}+2\right)+ \frac{3}{2}\epsilon  n -1\nonumber\\
  &= \frac{k}{C} \log \frac{n}{k}+\frac{\log_{2}C+2}{C}k+ \frac{3}{2}\epsilon  n -1\nonumber\\
  &\leq \frac{1}{C} \log{\binom{n}{k}}+(\log_{2}C+2)\frac{k}{C}+ \frac{3}{2}\epsilon  n -1, \label{eq:upper_bound_final}
\end{align}
  where $C=-\ln{(1-\epsilon)}$, and inequality (\ref{eq:upper_bound_final}) holds because of the lower bound of the binomial coefficient: $\binom{\nItems}{\nDef} \geq \left(\frac{\nItems}{\nDef}\right)^{\nDef}$. 
  
  Since $\epsilon$ is an arbitrarily small constant used by our upper-bounding technique, we can always select $\epsilon = o(1/\nItems)$, so that  $\expect[\nTests] < \frac{1}{C} \log{\binom{n}{k}}+(\log_{2}{C}+2)\frac{k}{C}+ o(1)$, which gives an overall asymptotic behavior of $\expect[\nTests] = O(\log_2{\binom{\nItems}{\nDef}} + \nDef)$, where the hidden term $\frac{1}{C}$ may be large. 
\end{proof}

\subsection{An example of occurrence matrix}
\label{app:exampleM}
Figure \ref{fig:m8} is an example of an occurrence matrix $\mat{M}$ for a population size of 8. 
The rows correspond to the number of infections, while the columns correspond to the binary representations of the test outcomes of the first stage of \Cref{alg:diag}.
Note that a test outcome pattern $\vct{s}$ is valid for $\nDef$ infections only when it is possible for the pattern $\vct{s}$ to occur given $\nDef$ infections---for invalid patterns we use a $0$.

\begin{figure*}[t]
\begin{center}
\includegraphics[width=\textwidth]{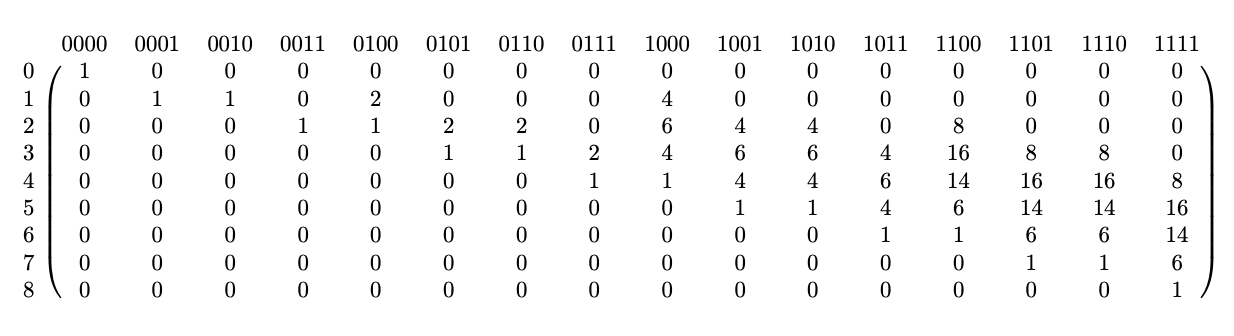}
\end{center}
\caption{An example of population size 8.}
\label{fig:m8}
\end{figure*} 

\subsection{Proof of \Cref{lem:recursiveM}}
\label{app:recursiveM}
The recursive relation for $\mat{M}$ is:
\begin{equation*}
   \mat{M}[k,s] = 
    \sum_{i=1}^{k}\binom{N_0}{i}\mat{M}[k-i,s_0],
\end{equation*} where $\mat{M}[0,0]=1$.

\begin{proof}
Note that we divide all the individuals into two categories based on a given pattern $\vct{s}$: in the first positive sub-tree or not, and then count all the cases. Note that there are three sources of possibilities: the number of infections in the first positive sub-tree, the infectious status in the first positive sub-tree, and the infectious status in the modified patterns and the corresponding number of infections. Given that there are $i$ infections in the first positive sub-tree, there are $\binom{N_0}{i}$ possible infectious status in the first positive sub-tree and $\mat{M}[k-i,s_0]$ possible infectious status for other leaves. In total, there are $\binom{N_0}{i}\mat{M}[k-i,s_0]$ possible infectious status for a given $i$. Furthermore, $i$ has $\nDef$ possibilities. The invalid pattern $\bf{s}$ for $\nDef$ will get a $0$ by the formula. Combining all yields the result.
\end{proof}

\subsection{Results for the Probabilistic case}
\label{app:probabilistic}
In the probabilistic setting, we range $\defProb$ in $[0,1]$ keeping it identical and independent for each individual,
and we average our results over $m = \{500,1000\}$ instances for each fixed $p$.
Our results are shown in Figures~\ref{fig:prob16} and~\ref{fig:prob1024}, and as expected, they are similar to the ones of the combinatorial case.

Note that HGBSA requires the exact knowledge of the number of infected individuals $k$, which is unknown in the probabilistic setting. For that reason, the input of HGBSA is merely the average number of infections $\hat{k} = \defProb \nItems$.
\begin{figure}[t]
\begin{subfigure}[h]{\linewidth}
\includegraphics[height = 5cm, width=0.96\linewidth]{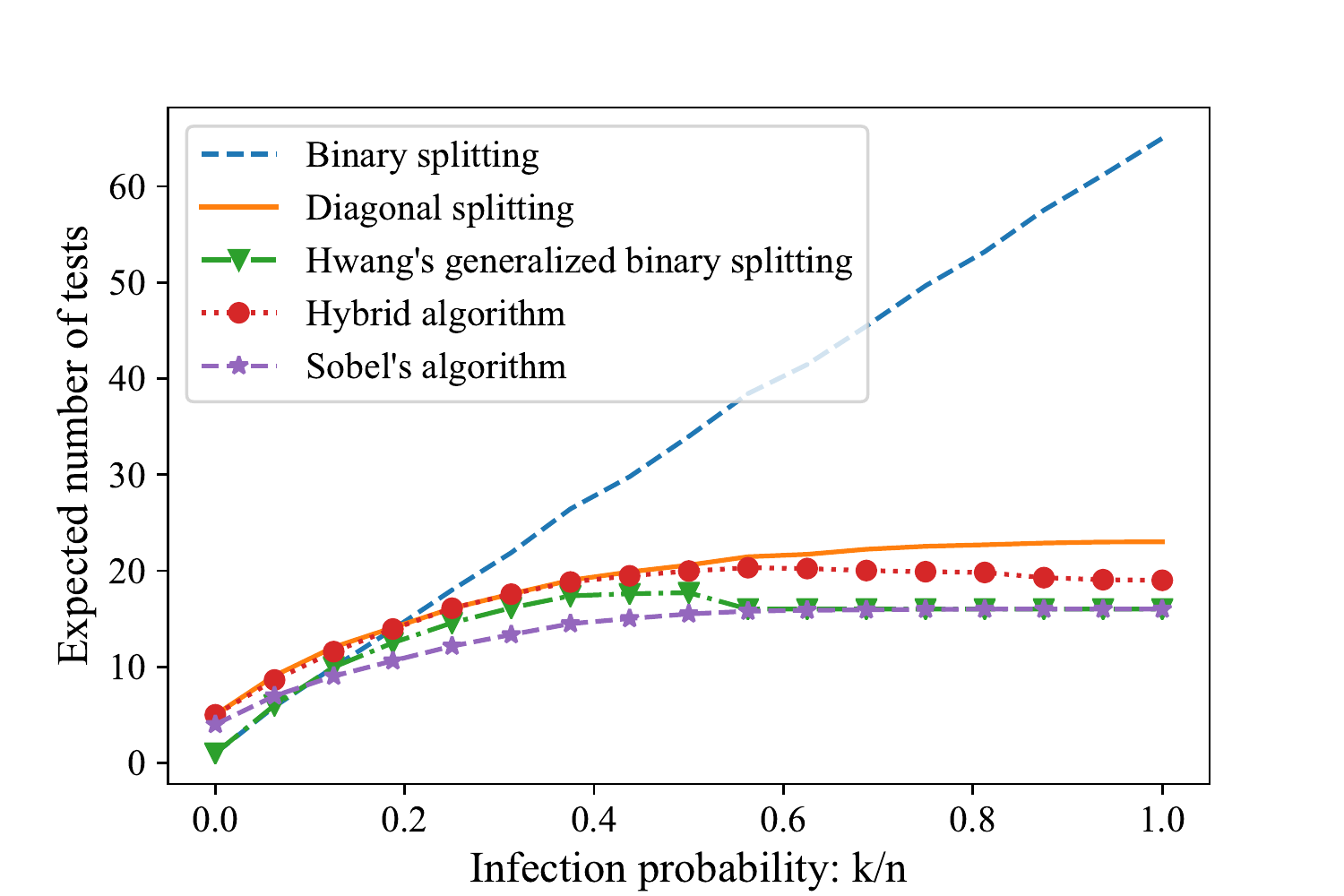}
\caption{Number of tests}
\end{subfigure}

\begin{subfigure}[h]{\linewidth}
\includegraphics[height = 5cm, width=0.96\linewidth]{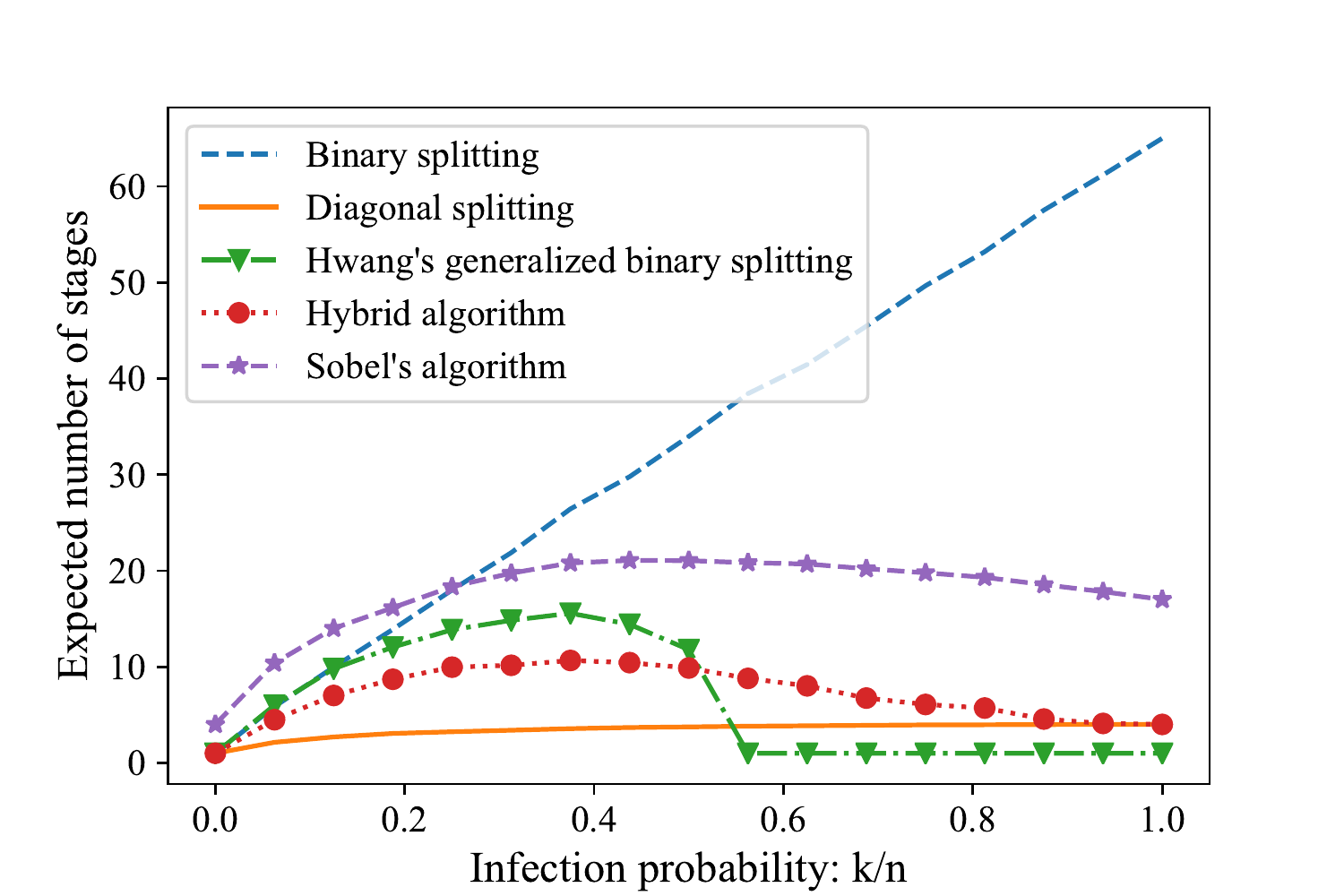}
\caption{Number of stages}
\end{subfigure}%
\caption{Probabilistic model with $\nItems=16$, results averaged over $500$ instances.}
\label{fig:prob16}
\end{figure}

\begin{figure}[t]
\begin{subfigure}[h]{\linewidth}
\includegraphics[height = 5cm, width=0.96\linewidth]{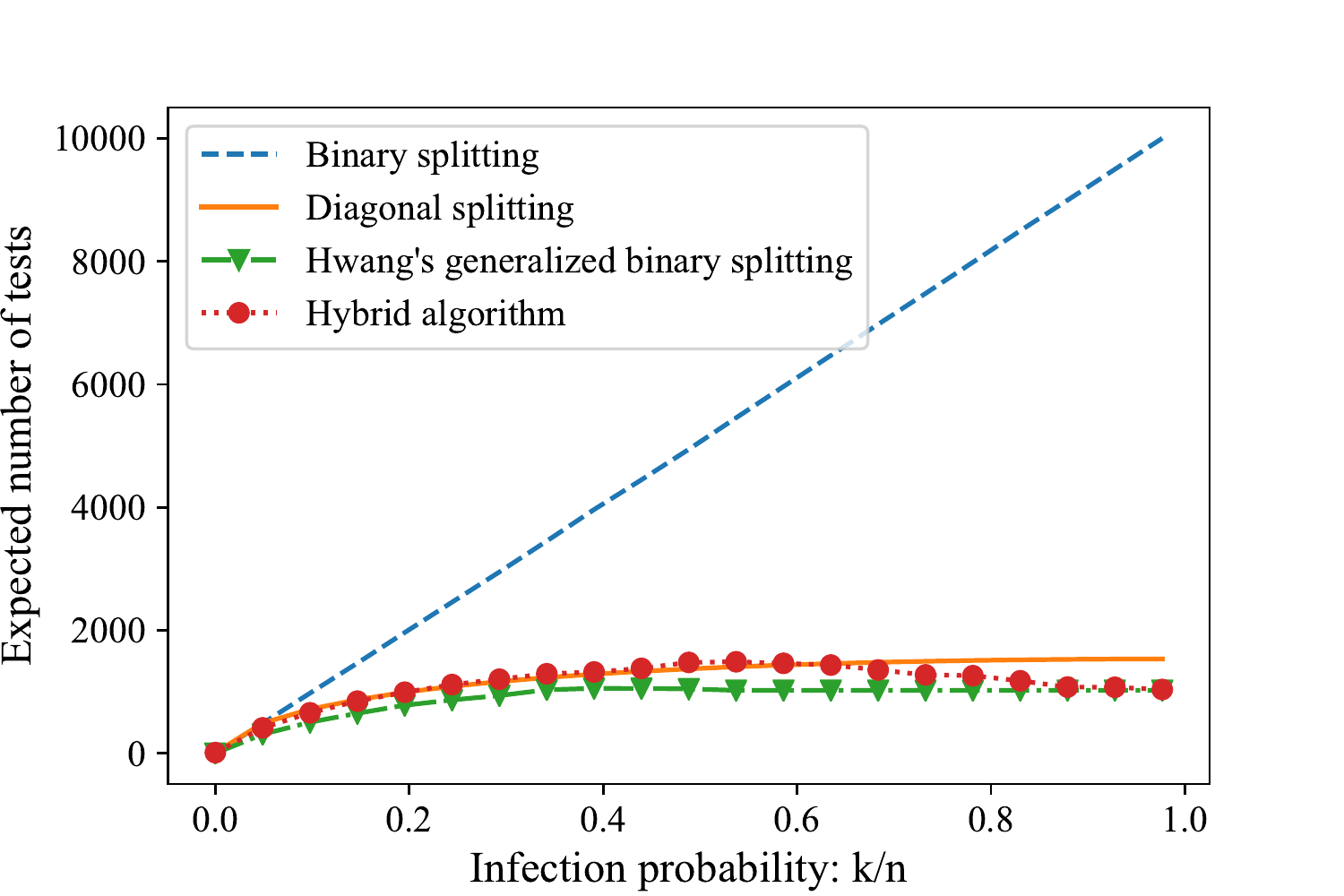}
\caption{Number of tests}
\end{subfigure}

\begin{subfigure}[h]{\linewidth}
\includegraphics[height = 5cm, width=0.96\linewidth]{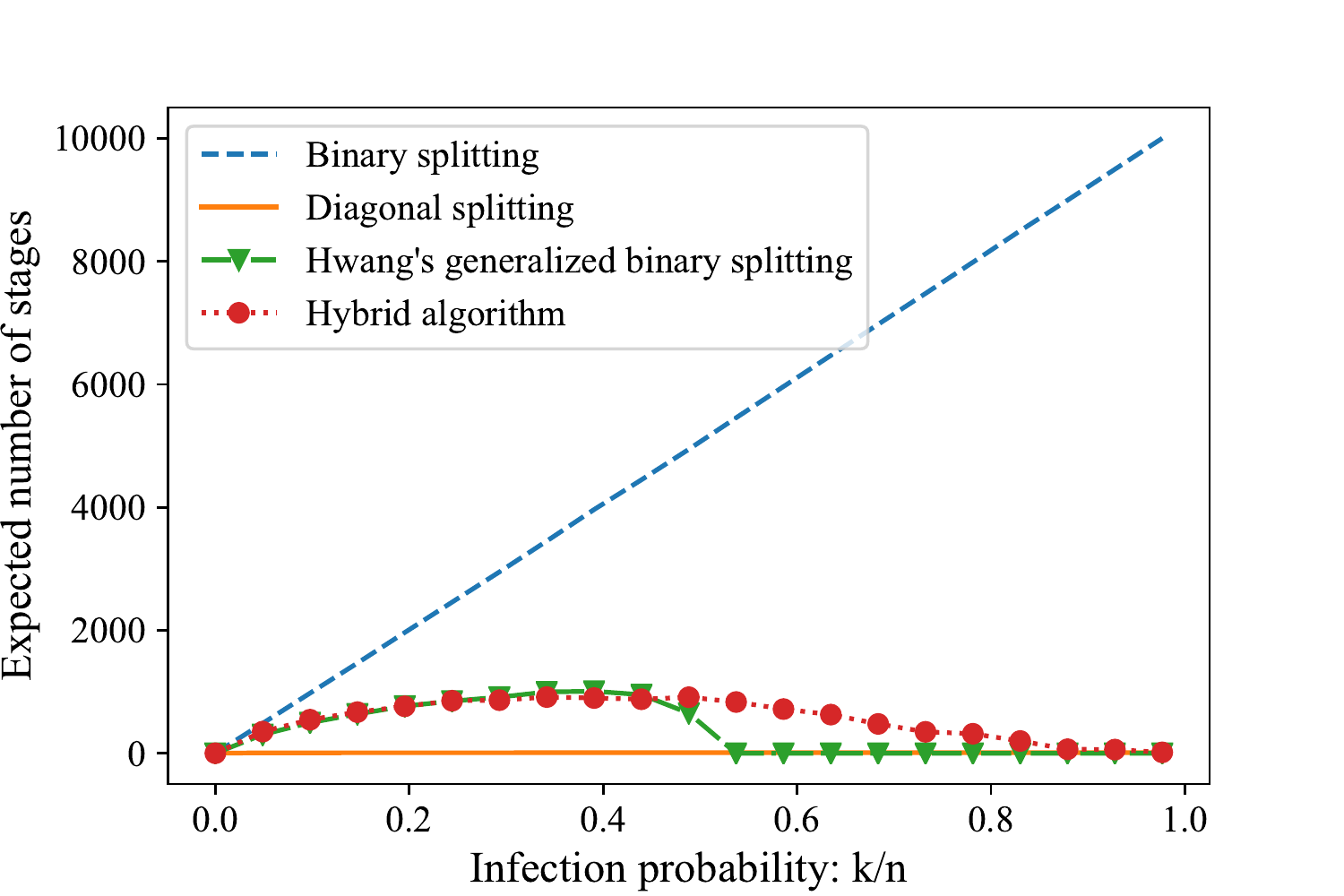}
\caption{Number of stages}
\end{subfigure}%
\caption{Probabilistic model with $\nItems=1024$, results averaged over $1000$ instances.}
\label{fig:prob1024}
\end{figure}

\end{document}